\documentclass[11pt]{article}
\usepackage[margin=1in]{geometry}
\usepackage[round]{natbib}

\usepackage{amsmath,amssymb,amsthm,booktabs,float,graphicx,mathtools}
\usepackage{microtype}
\usepackage{tikz}
\usetikzlibrary{arrows.meta,positioning,calc}

\newtheorem{proposition}{Proposition}
\newcommand{\KL}{\operatorname{KL}}

\newcommand{\E}{\mathbb{E}}
\newcommand{\R}{\mathbb{R}}
\newcommand{\Mcal}{\mathcal{M}}
\newcommand{\method}{SM-VTI}
\newcommand{\jointmethod}{SM-VTI-Joint}
\newcommand{\arxivtitleblock}[1]{%
  \begin{center}
    \vspace*{0.35in}
    {\LARGE\bfseries #1\par}
    \vspace{1.1em}
    {\normalsize\mdseries Pingping Yin \quad and \quad Xiyun Jiao\par}
    \vspace{0.45em}
    {\normalsize Department of Statistics and Data Science, Southern University of Science and Technology, Shenzhen, China\par}
    {\normalsize\textit{Correspondence:} \texttt{jiaoxy@sustech.edu.cn}\par}
  \end{center}
  \vspace{1.25em}%
}

\begin{document}

\thispagestyle{empty}

\arxivtitleblock{Structured Dimension-Matched Joint Variational Transdimensional Inference}

\begin{abstract}
Bayesian model selection couples a discrete model indicator with a
model-specific continuous parameter space.  We introduce structured
dimension-matched variational transdimensional inference (\method) for finite
enumerable model spaces.  A rooted construction graph expresses a model as a
sequence of local stop/child decisions.  Each typed edge compiles a declared
scientific parent--child edit into an exact native-coordinate
dimension-matching lifting; an edge-conditioned flow then learns the residual
continuous transport.  The resulting local policy and conditional flow define
one direct joint variational distribution, without embedding every model in a
saturated maximum-dimensional surrogate.  We derive its exact path density
and optimize the joint reverse-KL objective.  On a controlled 15-model target,
\jointmethod{} recovers terminal masses, local actions, and nonlinear
conditional geometry.  On a 128-model misspecified robust variable-selection
problem, a 10-data-set nearly parameter-matched affine comparison with AVTI shows
stronger early model-mass recovery and competitive final joint accuracy under
the same target-evaluation budget.
\end{abstract}

\section{Introduction}
Many Bayesian inference problems are trans-model: variable selection,
phylogenetic tree search, and geoscientific inversion are representative
examples~\citep{BS_fan2026reversible,J_NSR_jiao2021multispecies,J_PTRSA_sambridge2013transdimensional}.
Given data $\mathbf y$, suppose that $K\geq2$ candidate models are available.
Model $k$ has parameter $\boldsymbol\theta_k\in\boldsymbol\Theta_k\subset
\mathbb R^{d_k}$, where the dimension $d_k$ may vary with $k$.  The unknown is
therefore $(k,\boldsymbol\theta_k)$ on the disjoint union
$\boldsymbol\Omega=\bigcup_{k=1}^K(\{k\}\times\boldsymbol\Theta_k)$, with
joint posterior
\begin{equation}
\pi(k, \boldsymbol{\theta}_k \mid \mathbf{y}) \propto \pi(k) \, \pi(\boldsymbol{\theta}_k \mid k) \, \pi(\mathbf{y} \mid \boldsymbol{\theta}_k, k),
\label{eq:trans_model_posterior}
\end{equation}
where $\pi(k)$ is the model prior and the remaining terms are the
model-specific parameter prior and likelihood.

Reversible-jump Markov chain Monte Carlo (RJMCMC) extends
Metropolis--Hastings to this setting~\citep{J_BIOMET_green1995reversible},
but cross-model proposals can mix poorly.  Amortized variational
transdimensional inference (AVTI) instead learns a direct joint approximation
in a saturated maximum-dimensional workspace~\citep{davies2025}.  It offers
independent post-training samples, but leaves open how discrete model
probabilities and model-conditional flows should share a joint reverse-KL
objective.

\paragraph{The structural representation problem.}
Dimension is not the only structure that changes between models.  For example, adding a regression block or activating a hierarchical effect requires a coordinate correspondence and a support-preserving transformation.  A zero-padded saturated vector records which coordinates are active, but does not by itself state the scientific meaning of a parent--child edit.

\paragraph{Our approach.}
\method{} constructs a direct joint distribution, summarized in
Figure~\ref{fig:overview}.  At every internal node, a locally normalized
policy chooses \texttt{stop} or one child; the resulting root-to-terminal
path is the sampled model indicator.  Traversing a child edge samples the
auxiliary coordinates required by its typed edit, applies the corresponding
dimension-matching lifting, and applies a shared conditional flow.  The
construction therefore uses the model graph both to generate $m$ and to state
how the native coordinates of $\theta_m$ are related to their ancestors.

Our contributions are:
\begin{enumerate}
 \item We introduce a native-dimensional construction tree whose typed
 birth, split, and merge declarations compile into exact local
 dimension-matching transports, and whose local decisions generate model
 indicators.
 \item We use one structure-conditioned continuous flow across the compiled
 paths, derive the resulting exact native-coordinate path density, and train
 the complete discrete--continuous distribution by joint reverse-KL.
 \item We evaluate the construction first on an exact hierarchical target and
 then against AVTI on a 128-model robust variable-selection problem, separating
 end-to-end model-selection quality from per-model conditional fit.
\end{enumerate}

\begin{figure}[H]
\centering
\resizebox{\textwidth}{!}{\input{figs/sm_vti_overview}}
\caption{\textbf{\method{} is a graph-structured discrete--continuous flow.}
(a) Models form a finite rooted construction tree; typed edits state the
scientific relation between parent and child.  (b) Base variables are
transported along the unique path to a model using exact liftings and a shared
edge-conditioned flow, with all Jacobians accumulated.  (c) An edge combines
a dimension-matching lifting, structural context, an active-coordinate mask,
and a separate support chart.  (d) A locally normalized policy samples
\texttt{stop}/child actions from the root.  The selected path and its
continuous transport jointly define $q(m,\theta_m)$ and are optimized by one
reverse-KL objective.}
\label{fig:overview}
\end{figure}

\section{Finite Transdimensional Target and Joint Objective}
\label{sec:target}

\subsection{A finite transdimensional posterior}

For model prior $p(m)$, define the unnormalized model-conditional target,
normalizing constant, and conditional posterior
\begin{equation}
\begin{aligned}
 \gamma_m(\theta_m)
 &=p(\theta_m\mid m)p(y\mid\theta_m,m),\\
 Z_m&=\int_{\Theta_m}\gamma_m(\theta_m)\,d\theta_m,\\
 \pi_m(\theta_m)&=\gamma_m(\theta_m)/Z_m .
\end{aligned}
\label{eq:target}
\end{equation}
The desired posterior model mass is
\begin{equation}
 \pi_M(m)=\frac{p(m)Z_m}{\sum_{j\in\Mcal}p(j)Z_j}.
\label{eq:model-posterior}
\end{equation}

\subsection{The joint objective and its model-mass implication}
\label{sec:joint-objective}

Consider
$q_{\psi,\phi}(m,\theta_m)=q_\psi(m)q_\phi(\theta_m\mid m)$ and set
$\delta_m(\phi)=\KL(q_\phi(\cdot\mid m)\|\pi_m)$.

\begin{proposition}[Conditional-gap tilt]
\label{prop:tilt}
For full-support $q_\psi$,
\begin{equation}
 \KL(q_{\psi,\phi}\|\pi)=
 \KL(q_\psi\|\pi_M)+
 \sum_{m\in\Mcal}q_\psi(m)\delta_m(\phi).
\label{eq:kl-decomp}
\end{equation}
For fixed $\phi$, the unique full-support minimizer is
\begin{equation}
 q_\psi^\star(m)=
 \frac{p(m)Z_m e^{-\delta_m(\phi)}}
 {\sum_jp(j)Z_j e^{-\delta_j(\phi)}}.
\label{eq:tilt}
\end{equation}
\end{proposition}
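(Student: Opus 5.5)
The plan is to expand the joint KL directly, using the factorization of both distributions over the disjoint union $\boldsymbol\Omega$. The target factorizes as $\pi(m,\theta_m)=\pi_M(m)\,\pi_m(\theta_m)$. To see this, combine \eqref{eq:target} and \eqref{eq:model-posterior}: $\pi(m,\theta_m)=p(m)\gamma_m(\theta_m)/\sum_j p(j)Z_j$, and then multiply and divide by $Z_m$. The variational family factorizes as $q_\psi(m)q_\phi(\theta_m\mid m)$ by construction. Writing the KL as a sum over $m$ of integrals over $\Theta_m$, the log-ratio splits as
\[
\log\frac{q_\psi(m)}{\pi_M(m)}+\log\frac{q_\phi(\theta_m\mid m)}{\pi_m(\theta_m)}.
\]
The first term is constant in $\theta_m$, so integrating it against $q_\phi(\cdot\mid m)$ gives $\KL(q_\psi\|\pi_M)$. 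The second term gives $q_\psi(m)\delta_m(\phi)$ for each $m$. Summing over $m$ yields \eqref{eq:kl-decomp}.

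The one point needing care is when the terms are finite, since $\delta_m$ could be $+\infty$. Because $\Mcal$ is finite, I will either assume $\delta_m(\phi)<\infty$ for all $m$ or note that both sides are then $+\infty$. The decomposition uses only nonnegative-valued KL pieces after regrouping, so no $\infty-\infty$ ambiguity arises. The full-support assumption makes every weight $q_\psi(m)>0$. It also ensures that $\pi_M(m)>0$ (each $Z_m>0$, $p(m)>0$) is what matters for finiteness of the first term.

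For the minimizer, fix $\phi$ and combine the two terms of \eqref{eq:kl-decomp} into a single sum:
\[
\sum_m q_\psi(m)\Big[\log q_\psi(m)-\log\big(p(m)Z_m e^{-\delta_m(\phi)}\big)\Big]+\log\sum_j p(j)Z_j.
\]
Let $C=\sum_j p(j)Z_je^{-\delta_j(\phi)}$ and define $q^\star_\psi$ as in \eqref{eq:tilt}. Then this expression equals
\[
\KL(q_\psi\|q^\star_\psi)-\log C+\log\sum_j p(j)Z_j.
\]
The last two terms do not depend on $\psi$. By Gibbs' inequality, $\KL(q_\psi\|q_\psi^\star)\ge0$, with equality iff $q_\psi=q^\star_\psi$. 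This gives both the minimum and its uniqueness.

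It remains to check that $q^\star_\psi$ has full support, so that it lies in the admissible class. This holds when every $\delta_m(\phi)<\infty$. If some $\delta_m=\infty$, the tilt assigns that model zero mass and no full-support minimizer exists. I expect this bookkeeping to be the only real obstacle, and I would handle it by stating the finiteness of the $\delta_m$ as an implicit standing assumption.
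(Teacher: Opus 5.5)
Your proposal is correct and follows essentially the same route as the paper: you substitute the factorizations $\pi=\pi_M\pi_m$ and $q_{\psi,\phi}=q_\psi q_\phi$ to get the decomposition, then absorb $\delta_m(\phi)$ into the reweighted reference $p(m)Z_me^{-\delta_m(\phi)}$, which is the paper's $a_m=\pi_M(m)e^{-\delta_m(\phi)}$ up to the constant $\sum_jp(j)Z_j$, and conclude by Gibbs' inequality. Your explicit finiteness bookkeeping, which the paper leaves implicit, is a reasonable addition; note, however, that $\pi_M(m)>0$ is a standing assumption on the target (needed for $\pi_m$ to be defined at all) and does not follow from the full support of $q_\psi$.
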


\begin{proof}
Substituting the two factorizations in the KL definition gives
\begin{align*}
\KL(q_{\psi,\phi}\|\pi)
 &=\sum_m q_\psi(m)\left[
 \log\frac{q_\psi(m)}{\pi_M(m)}+\delta_m(\phi)
 \right]\\
 &=\KL(q_\psi\|\pi_M)+\sum_mq_\psi(m)\delta_m(\phi).
\end{align*}
For fixed $\phi$, set $a_m=\pi_M(m)e^{-\delta_m(\phi)}$.  The last display
equals $\KL(q_\psi\|a/\sum_j a_j)-\log\sum_j a_j$, whose unique minimizer
is $a_m/\sum_j a_j$.  Substitution of $\pi_M(m)\propto p(m)Z_m$ proves the
claim.
\end{proof}

Equation~\eqref{eq:tilt} describes precisely
what a direct joint reverse-KL approximation optimizes: model mass and
conditional approximation are learned together.  It motivates reporting both
model probabilities and conditional fit in the experiments below.

\subsection{Related Works}

RJMCMC and transport reversible jump proposals determine model frequencies through an exact Metropolis--Hastings correction; their learned maps affect mixing behaviour,
not the invariant posterior \citep{green1995,davies2023}.  AVTI learns an
amortized joint approximation and is therefore a direct variational baseline
\citep{davies2025}.  \method{} retains direct variational inference but
changes the representation: it expresses both discrete moves and dimension
changes through declared native edits.  Its local tree policy is a structured
alternative to AVTI's saturated discrete--continuous representation.
Normalizing-flow expressivity \citep{papamakarios2021}, discrete policy
structure, and continuous transport capacity are therefore evaluated together
and, where possible, separately.

\section{Structured Dimension-Matched VTI}
\label{sec:method}

\subsection{Compiled typed dimension matching}

Every model is a node---not only a leaf---of a finite rooted tree.  A model
declaration
\[
 \mathcal S_m=(m,I_m,O_m,g_m,d_m)
\]
contains an ordered list of native parameter slots $I_m$, structural objects
$O_m$, global features $g_m$, and the native dimension $d_m$.  A slot is a
named scientific coordinate, not merely a position in a saturated vector.
For example, loadings retained from a two-factor model are distinct from the
new block created by a third factor.  A padded tensor may be used for neural
batching, but inactive entries are deterministic placeholders and receive no
probability density.

Each directed edge $e=(v,c)$ is compiled once, before learning.  Its typed
declaration specifies a parent-slot partition $(R_e,S_e)$, child destination
slots $D_e$, a child ordering permutation $P_c$, and a primitive label
$\tau_e$.  For $r_e=d_c-d_v\geq0$, it draws an auxiliary
$u_e\sim\rho_e$ in $\R^{r_e}$ and requires a declared diffeomorphism
\begin{equation}
 T_{\tau_e}:\R^{|S_e|+r_e}\longleftrightarrow\R^{|D_e|},
 \qquad |D_e|=|S_e|+r_e .
\label{eq:typed-primitive}
\end{equation}
The compiler rejects an edge unless the retained and destination slots cover
the child exactly and the primitive supplies a forward map, an inverse merge,
and an analytic log-Jacobian.  It then performs only deterministic slot
bookkeeping:
\begin{align}
 \widetilde z_e=L_e(z_v,u_e)
 &=P_c^{-1}\begin{bmatrix}z_{R_e}\\T_{\tau_e}(z_{S_e},u_e)\end{bmatrix},\\
 J_e^L&=\log\left|\det DT_{\tau_e}\right| .
\label{eq:lifting}
\end{align}
Thus the learned flow never has to infer which old coordinate became which
new coordinate.  The adapter declares that scientific correspondence; the
compiler certifies its dimensional consistency.  The same interface admits
conditionally centred births \citep{brooks2003}, simplex-ALR splits,
ordered insertions, or rate-conserving splits, provided that the appropriate
inverse and Jacobian are supplied.

\paragraph{Why this is not a proposal move.}
Equation~\eqref{eq:lifting} is a bijective coordinate transport used to
construct a density, rather than an RJMCMC proposal followed by an acceptance
test.  In the reverse direction, $P_c$ restores the declared order,
$T_{\tau_e}^{-1}$ recovers $(z_{S_e},u_e)$, and retained slots are copied.
Consequently the auxiliary density and the lift Jacobian enter the variational
density exactly.  This is the distinction that makes a graph of
add/birth, split/merge, or other typed edits usable inside a normalizing flow.

\subsection{Path transport and exact conditional density}

The lifting supplies the semantic dimension match but not the remaining
posterior relocation.  We therefore apply a shared edge-conditioned flow
after every lift,
\begin{equation}
 z_i=F_\omega(\widetilde z_{e_i};h_{e_i},a_{e_i}),
 \label{eq:edge-flow}
\end{equation}
where $a_{e_i}$ selects active child-native coordinates and
$h_{e_i}=H_\eta(\mathcal S_v,\mathcal S_c,\tau_{e_i},\ell_{e_i})$ encodes
the endpoint structures, edit type, depth, and local group features.  In our
experiments $F_\omega$ is a shared affine RealNVP stack with a
standard-Gaussian base.  Its weights are shared across all edges; only the
compiled context, active-coordinate mask, and typed lifting vary with the
edge.  This separates known scientific geometry from residual neural
transport.

Let $G_m=(r,e_1,\ldots,e_L,m)$ be the unique path to model $m$.  For a root
base draw $\epsilon_0\sim b_0$, the generative calculation is
\begin{align}
 z_0&=F_r(\epsilon_0), &
 \widetilde z_{e_i}&=L_{e_i}(z_{i-1},u_{e_i}),\\
 z_i&=F_\omega(\widetilde z_{e_i};h_{e_i},a_{e_i}), &
 \theta_m&=\chi_m(z_L),
\label{eq:path-program}
\end{align}
where $\chi_m$ is the model-specific support chart.  Evaluating a sample
reverses this program uniquely, recovering $\epsilon_0$ and every
$u_{e_i}$.  If each $J$ is a forward log-absolute-Jacobian, change of
variables gives
\begin{equation}
\begin{split}
 \log q_\phi(\theta_m\mid m)
 &=\log b_0(\epsilon_0)+\sum_{i=1}^{L}\log\rho_{e_i}(u_{e_i})-J_r^F\\
 &\quad-\sum_{i=1}^{L}\left(J_{e_i}^{L}+J_{e_i}^{F}\right)-J_m^\chi .
\end{split}
\label{eq:path-density}
\end{equation}
There is no density term for inactive padded entries and no latent-variable
integration left over: the compiler's inverse determines the unique base and
auxiliary values.  We audit this identity by independent forward/reverse
evaluations, inverse round trips, and componentwise Jacobian checks.

\subsection{Local-policy joint variational flow}
\label{sec:joint-weights}

At every internal node $v$, let $q_\psi(\cdot\mid v)$ be a locally
normalized distribution over \texttt{stop} and the children of $v$.  A leaf
stops with probability one.  Thus a terminal model is generated by the same
construction graph that defines its continuous path:
\begin{equation}
q_\psi(m)=q_\psi(\texttt{stop}\mid m)
\prod_{(v,c)\in G_m}q_\psi(c\mid v).
\label{eq:tree-policy}
\end{equation}
The typed liftings and shared flow induce $q_\phi(\theta_m\mid m)$ after
that path has been selected.  Together they define a direct joint
distribution $q_{\psi,\phi}(m,\theta_m)$.  We optimize
\begin{equation}
\begin{aligned}
\mathcal L_{\mathrm{joint}}(\psi,\phi)
&=
\sum_{m\in\Mcal} q_\psi(m)
\Bigg[
\log\frac{q_\psi(m)}{p(m)}
\\[-2pt]
&\qquad\quad
+ \E_{q_\phi(\cdot\mid m)}
\left[
\log\frac{q_\phi(\theta_m\mid m)}
{\gamma_m(\theta_m)}
\right]
\Bigg].
\end{aligned}
\label{eq:joint-loss}
\end{equation}
This is the usual reverse-KL objective for a direct approximation
$q_\psi(m)q_\phi(\theta_m\mid m)$ to the joint posterior.  The typed
liftings, shared flow, support charts, and exact path density in
\eqref{eq:path-density} are trained together with the local action policy.
For the finite experiments in this paper we enumerate all terminal models in
the outer sum and use reparameterized conditional samples, avoiding a
categorical score-function estimator.  For fixed conditional flows,
Proposition~\ref{prop:tilt} characterizes the population mass induced by this
joint objective.

\section{Experiments}
\label{sec:experiments}

We organize each experiment as one self-contained unit: its model and target
construction, its frozen evaluation design, and then its results.

\subsection{Exact tree-structured posterior benchmark}
\label{sec:hwgt-joint}

\paragraph{Target construction.}
This controlled target is constructed directly from a specified normalized
joint posterior, rather than from data and a likelihood.  Its model space is a complete
depth-three binary construction tree in which every node is a stopping model.
At an internal node $v$, fixed probabilities
$\alpha_v(\mathrm{stop}),\alpha_v(L),\alpha_v(R)$ define the terminal mass
of a model $m$ as the product of all child actions on its root-to-$m$ path and
the final stop action,
\[
 \pi_M(m)=\prod_{(v\to c)\in\operatorname{path}(m)}\alpha_v(c)\,
 \alpha_m(\mathrm{stop}).
\]
For example, $\pi_M(\texttt{root})=0.08$ and
$\pi_M(\texttt{LLL})=0.54\times0.53\times0.49=0.140$.  Thus the target has
substantial posterior mass both at shallow stopping models and at deep models.

A model at depth $r$ has native coordinate
$(\texttt{shared\_0},\texttt{shared\_1},\texttt{effect}_1,\ldots,
\texttt{effect}_r)$ and dimension $d_m=2+r$: every child action appends one
named effect.  Conditional on $m$, the target first draws a correlated Gaussian
$z\sim\mathcal N_{d_m}(\mu_m,L_mL_m^\top)$ and then applies the triangular
unit-Jacobian map
\begin{align*}
 \theta_0&=z_0,\\
 \theta_j&=z_j+0.55\sin(z_{j-1})+0.20s_m\tanh(z_0)\\
 &\quad+0.08\cos((j+1)z_0),\qquad j\geq1,
\end{align*}
where $s_m\in\{-1,1\}$ is a fixed branch label.  Calling the resulting
normalized density $f_m(\theta_m)$ and taking $p(m)=1/15$, we set
\[
 \gamma_m(\theta_m)=f_m(\theta_m)\frac{\pi_M(m)}{p(m)}.
\]
Consequently $p(m)\gamma_m(\theta_m)=\pi_M(m)f_m(\theta_m)$ is already the
exact normalized joint posterior: no evidence estimate, data likelihood, or
reference chain is involved.  This construction isolates recovery of the
root-to-action distribution and all native-dimensional conditionals.

\paragraph{Experimental design and results.}
This example tests the full object optimized by \jointmethod{}: the root-to-action
policy generates the terminal model while the shared continuous transport
generates its native-coordinate state.  No target transport is fixed in the
proposal.  At each update we enumerate all 15 terminal terms in the joint
reverse-KL and use 16 reparameterized conditional samples per model.  Five
independent seeds use 1,500 updates , a
standard-Gaussian base, and a four-layer width-96 edge-conditioned affine
flow.

Figure~\ref{fig:hwgt} makes all three factors of the joint approximation
visible in their native construction-tree form: terminal model mass, local
stop/child policy, and deep conditional geometry.
Across five seeds, terminal TV is $0.03155\pm0.00044$, joint reverse-KL is
$0.07098\pm0.00546$, and exact held-out joint NLL is
$8.2287\pm0.0244$ (mean $\pm$ seed SD).  The mean maximum internal-node
action TV is $0.02191\pm0.00030$ and the largest path-density residual is
$7.11\times10^{-15}$.

\begin{figure}[H]
\centering
\includegraphics[width=\textwidth]{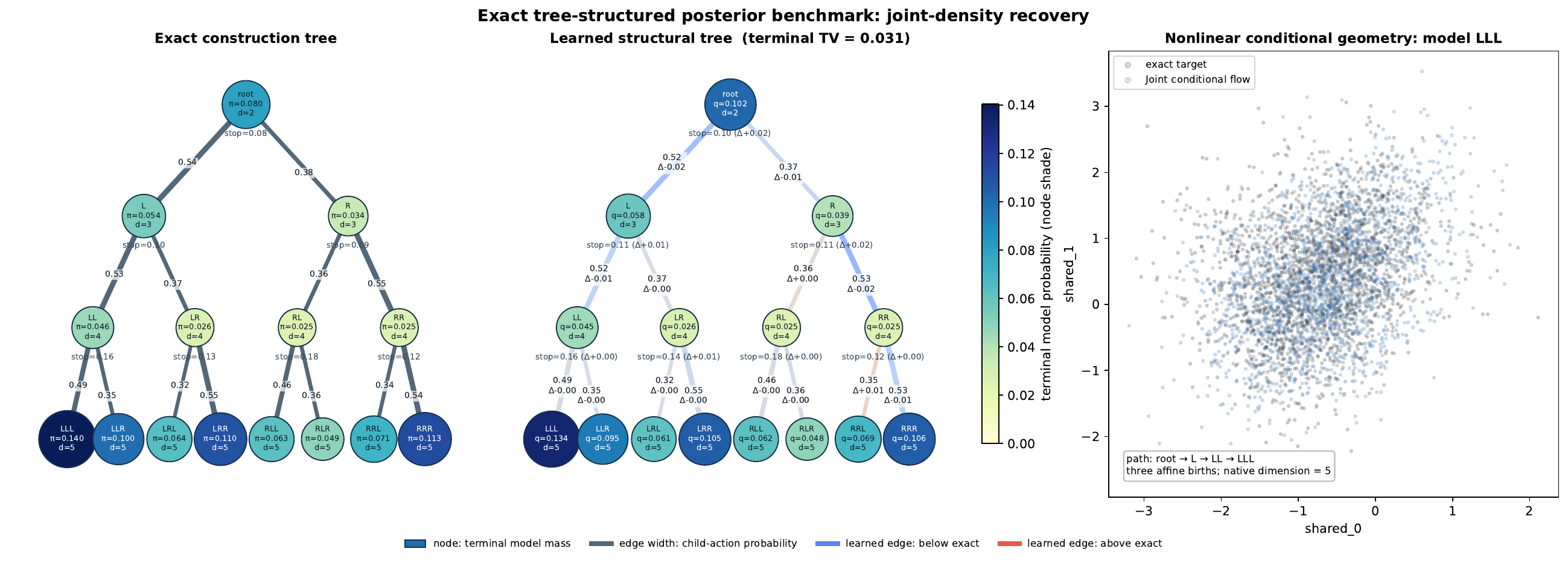}
\caption{\textbf{An exact tree-structured posterior benchmark directly visualizes \jointmethod{}.} The controlled
target has 15 stopping models and known stop/left/right action distributions.
For prespecified representative seed 3101, the left tree shows the exact
construction policy and the centre tree shows the learned policy. Node shade
is terminal model mass, edge width is child-action probability, and the node
and learned-edge labels give local action probabilities and their signed
deviation from the exact policy. The right panel overlays exact and learned
conditional draws for the deep model \texttt{LLL}, reached through three
affine births.}
\label{fig:hwgt}
\end{figure}

\subsection{Bayesian misspecified robust variable selection}
\label{sec:vs128-joint}

\paragraph{Target construction.}
The substantive comparison uses the 128-model misspecified robust
variable-selection target of \citet{davies2025}.  An intercept is always
active and each of seven scalar predictors may be included, so
$m=\gamma\in\{0,1\}^7$ and $d_m=1+\lVert\gamma\rVert_0$.  The analysis model
uses the published $0.9/0.1$ two-scale Gaussian residual mixture, independent
$\mathcal N(0,1.5^2)$ coefficient priors, and a uniform subset prior.  We
report the published Medium misspecification setting with 50 observations.
This is an enumerable but nontrivial model space in which AVTI's original
discrete proposal remains a meaningful baseline.  The construction tree uses
a fixed predictor order: each child action appends the next included
coefficient through a typed birth.

\paragraph{Experimental design.}
For evaluation, an independently generated four-block RJMCMC reference is
opened only after all variational checkpoints have frozen.  We report model
TV, held-out joint NLL,
\begin{equation}
 \mathrm{NLL}_{\mathrm{joint}}=
 -\E_{(m,\theta_m)\sim p_{\mathrm{ref}}}
 [\log q_\psi(m)+\log q_\phi(\theta_m\mid m)],
\label{eq:joint-nll}
\end{equation}
per-model conditional NLL, and top-10 model overlap.  Lower TV and NLL are
better.  The reference retains 409,600 post-burn-in states per data set; it is
an evaluation standard with quantified Monte Carlo error, not a training
target or an exact posterior oracle.

The completed comparison is a capacity-matched affine control.  AVTI-Joint
uses its saturated \texttt{affine2} conditional transform and SFE categorical
model proposal.  \jointmethod{} uses a canonical variable add/birth tree:
locally normalized stop/child decisions generate the subset, and typed native
births are followed by two shared affine coupling layers.  The learned
parameter counts are 15,795 for \jointmethod{} and 15,520 for AVTI.  Thus this
is an \emph{end-to-end} comparison under a matched affine family and nearly
matched parameter budget: the discrete policy and continuous representation
remain method-native rather than being artificially interchanged.

For each of 10 paired data/optimizer seeds, both methods use the same data,
target, model order, uniform model prior, standard-Gaussian base, float64
arithmetic, and 1,024 target evaluations per update.  We freeze checkpoints
at 2,000, 10,000, and 30,000 updates before opening the reference.   Table~\ref{tab:vs128-joint} records the frozen comparison
contract.

\begin{table}[H]
\caption{\textbf{Bayesian misspecified robust variable-selection affine comparison contract.}
Both methods use the same scientific target, Gaussian base, float64
arithmetic, 10 paired data/optimizer seeds, and target-call budget.  Their
discrete and continuous constructions remain method-native.}
\label{tab:vs128-joint}
\centering
\small
\resizebox{\textwidth}{!}{%
\begin{tabular}{@{}lll@{}}
\toprule
Method & discrete model proposal & continuous transport  \\
\midrule
\jointmethod{} & local stop/child tree policy &
native typed add/birth + two shared affine coupling layers  \\
AVTI-Joint & published SFE categorical policy &
saturated \texttt{affine2} transform  \\
\bottomrule
\end{tabular}
}
\end{table}

\paragraph{Results.}
Table~\ref{tab:vs128-affine-results} reports all completed Medium-affine
checkpoints.  At 2,000 updates, \jointmethod{} has lower model TV on all 10
paired data sets and lower conditional NLL on all 10: mean TV is
$0.205\pm0.080$ versus $0.427\pm0.146$.  The 2,000-update AVTI NLL mean is
dominated by one unstable replicate (its median conditional NLL is 2.549);
we retain it rather than remove a paired result.  With further optimization,
the conditional fit of both methods improves, whereas model TV is not
monotone under the joint reverse-KL objective.  At 30,000 updates,
\jointmethod{} has lower TV, conditional NLL, and held-out joint NLL on
6, 7, and 8 of the 10 paired data sets, respectively.  The corresponding
means are $0.252$ versus $0.260$ in TV, $1.289$ versus $1.835$ in conditional
NLL, and $5.665$ versus $6.209$ in joint NLL.  Top-10 overlap is comparable.

Figure~\ref{fig:vs128-affine} gives the AVTI-style diagnostics at the final
budget.  It exposes more than a scalar aggregate: each point compares one
learned model probability to its held-out RJMCMC frequency, and the lower
panels show the conditional cross-entropy for the same model.  These results
are evidence for the complete native structured construction in this affine
control; they do not establish universal dominance of native coordinates over
saturated flows or over other flow families.

\begin{table}[H]
\caption{\textbf{Completed Medium-affine results over 10 paired data sets.}
Entries are mean (sample standard deviation), except Top-10, which is the mean
number of shared models.  TV, joint NLL, and conditional NLL are lower-is-better.}
\label{tab:vs128-affine-results}
\centering
\small
\begin{tabular}{@{}ccrrrr@{}}
\toprule
Updates & Method & model TV & joint NLL & conditional NLL & Top-10 \\
\midrule
2k  & \jointmethod{} & $0.205\,(0.080)$ & $5.770\,(1.316)$ & $1.627\,(1.129)$ & $7.5$ \\
    & AVTI-Joint     & $0.427\,(0.146)$ & $119.047\,(352.490)$ & $114.139\,(352.516)$ & $6.4$ \\
\addlinespace
10k & \jointmethod{} & $0.254\,(0.086)$ & $5.720\,(1.105)$ & $1.369\,(0.931)$ & $7.4$ \\
    & AVTI-Joint     & $0.273\,(0.074)$ & $6.510\,(1.962)$ & $2.094\,(1.907)$ & $7.5$ \\
\addlinespace
30k & \jointmethod{} & $0.252\,(0.087)$ & $5.665\,(1.095)$ & $1.289\,(0.874)$ & $7.4$ \\
    & AVTI-Joint     & $0.260\,(0.077)$ & $6.209\,(1.692)$ & $1.835\,(1.616)$ & $7.3$ \\
\bottomrule
\end{tabular}
\end{table}

\begin{figure}[H]
\centering
\includegraphics[width=\textwidth]{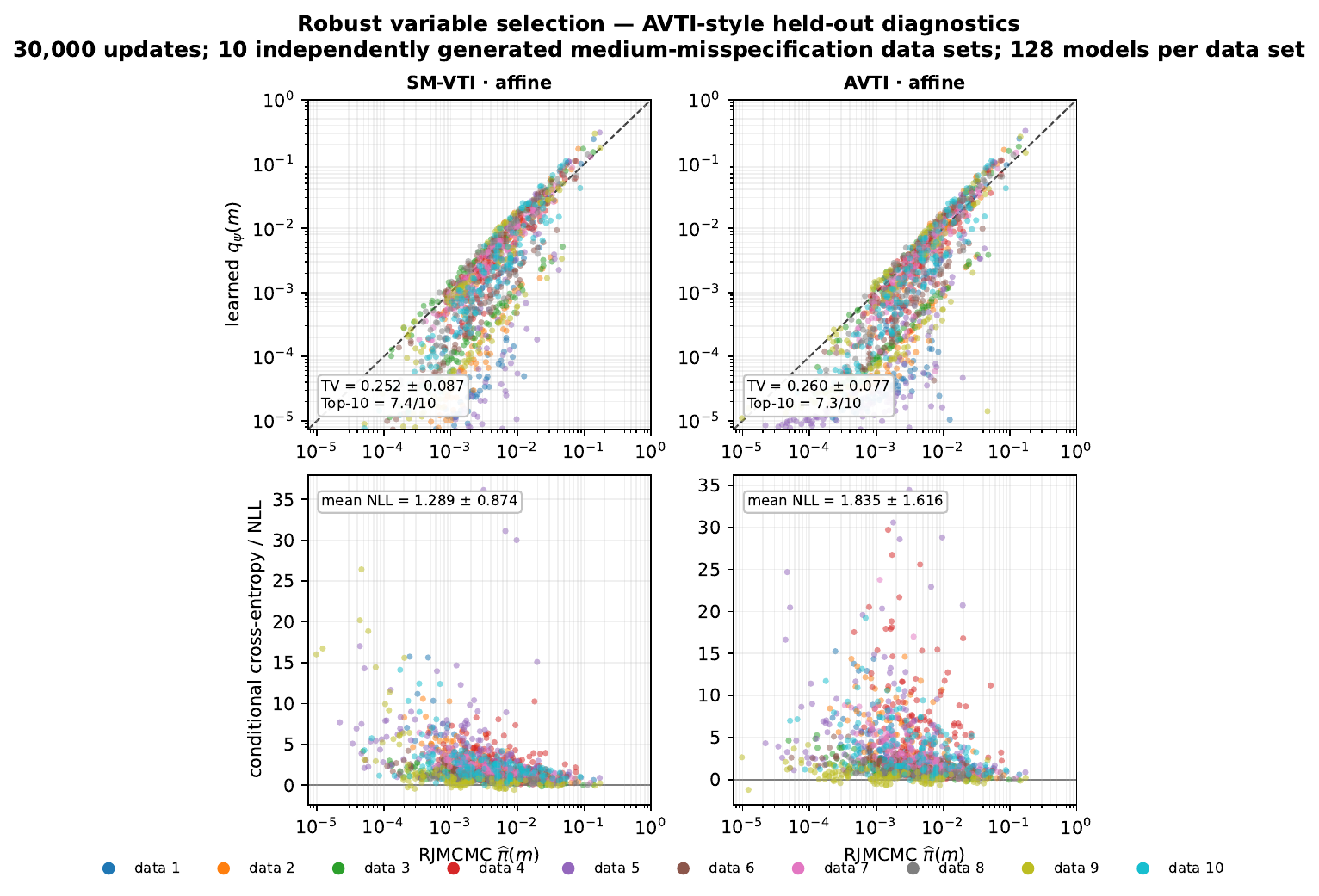}
\caption{\textbf{AVTI-style held-out diagnostics for the Medium-affine comparison at 30,000 updates.}
Each column pools the 128 models from each of 10 independently generated data
sets (1,280 points).  Top: learned terminal probability $q_\psi(m)$ against
the held-out RJMCMC frequency $\widehat\pi(m)$ on log--log scales; the dashed
line is equality.  Bottom: per-model held-out conditional cross-entropy/NLL
against $\widehat\pi(m)$.  Colours identify data sets.  Insets report the
mean and sample standard deviation across data sets.}
\label{fig:vs128-affine}
\end{figure}

\section{Discussion and Limitations}

The contribution of \method{} is a graph-structured joint flow.  The local
policy is not an add-on categorical classifier: it generates the model index
through the same construction path that specifies the required dimension
matching.  Typed liftings make scientific coordinate correspondences explicit,
and the shared flow is reserved for residual continuous geometry.

The current method assumes a finite enumerable model set with a declared
rooted construction and a unique canonical path.  Reverse-KL can still favor
models whose conditionals are easier to approximate, as characterized by
Proposition~\ref{prop:tilt}.  The empirical evidence here is deliberately
limited to an exact tree-structured benchmark and a Medium-scale affine
comparison with AVTI on Bayesian misspecified robust variable selection.
Larger model spaces, multiple construction paths, discontinuous targets, and
other flow families require separate algorithms and evaluation designs.

\section{Conclusion}

We presented \method{}, a structured dimension-matched joint variational
flow.  Root-to-child decisions generate a model indicator, while typed
liftings align native-dimensional coordinates before a shared conditional
flow learns the remaining transport.  The exact tree-structured posterior
benchmark verifies the complete discrete--continuous construction against a
known target.  In the 128-model robust variable-selection problem, the
completed 10-data-set affine control shows that the native structured
construction is competitive with AVTI's saturated representation under a
nearly matched parameter budget and matched target-evaluation budget.  The scope of this conclusion
is intentionally limited to the finite, enumerable settings studied here.

\bibliography{references}

\end{document}